\documentclass[runningheads]{llncs}

\usepackage[T1]{fontenc}
\usepackage{amsfonts}
\usepackage{braket}
\usepackage{graphicx}
\usepackage{amsmath}
\usepackage{enumitem}
\usepackage[colorlinks = true, citecolor=blue, linkcolor=blue, urlcolor=blue]{hyperref}
\usepackage{multirow}

\usepackage[table]{xcolor}
\definecolor{headerblue}{RGB}{211, 211, 244} % Custom color definition

\newcommand{\tr}{\textup{Tr}}

\usepackage{float}
\begin{document}

\title{Collusion-Secure Semi-Quantum Secret Sharing Scheme using a Quantum Third Party}
\titlerunning{Collusion-Secure SQSS}

\author{
  Santanu Majhi\thanks{ \texttt{\href{Santanu}{santanum\_r@isical.ac.in}}} 
  \and 
  Nirupam Basak\thanks{\texttt{\href{Nirupam}{nirupambasak2020@iitkalumni.org}}}
}

\institute{Cryptology and Security Research Unit, Indian Statistical Institute, Kolkata}

\maketitle

% ABSTRACT %
\begin{abstract}

     The security of a secret sharing protocol is compromised if one or more dishonest participants can reconstruct the secret by deviating from the prescribed protocol, particularly when such malicious behavior remains undetected. Semi-Quantum Secret Sharing (SQSS) is a variant of secret sharing in which the participants possess only classical capabilities, such as preparing and measuring qubits in the computational ($Z$) basis, while a quantum-capable third party assists the dealer in generating and distributing quantum shares of a classical secret. The limited quantum capabilities of the participants in SQSS protocols may introduce vulnerabilities to collusion attacks, enabling an assisting quantum third party, in collaboration with a dishonest classical participant, to jointly recover the secret even without detection. In this work, we propose a novel SQSS protocol that eliminates this vulnerability and achieves information-theoretic security against collusion attacks. We further prove that the proposed protocol is secure against a broad class of external and internal attacks. Finally, a comparative analysis demonstrates that our construction advances existing SQSS protocols by simultaneously optimizing qubit efficiency, involves a classical dealer, and resilience against DCNA attacks using basic Bell states.

    \keywords{Semi-quantum protocol, Secret Sharing Scheme, Collusion attack, Security, Improved protocol, DCNA secure.}
\end{abstract}

\section{Introduction} 
Quantum key distribution (QKD) is a fundamental primitive in quantum cryptography that enables the secure distribution of cryptographic keys based on the principles of quantum mechanics. The first QKD protocol, BB84~\cite{bennett1984quantum,PhysRevLett.85.441}, established a secure key-sharing mechanism between two parties, Alice and Bob, who possess quantum capabilities. A natural extension of this framework is to consider scenarios involving multiple participants, where full access to a shared key should not depend on the trustworthiness of a single participant. This motivation led to the development of Quantum Secret Sharing (QSS)~\cite{PhysRevLett.83.648}, which extends the idea of secure key distribution to multiple participants such that only an authorized subset can jointly reconstruct the secret. The seminal QSS scheme proposed by Hillery \emph{et al.}~\cite{PhysRevA.59.1829} provided an early realization of this concept. Subsequently, numerous QSS schemes~\cite{CHATTOPADHYAY2024100608,PhysRevA.72.022303,PhysRevA.71.022321,Xie2015,10.1007/s11128-012-0379-6,choudhury2009quantum,nie2011quantum} have been investigated, which can broadly be classified into two categories: schemes distributing a classical secret via quantum shares, and schemes distributing a quantum secret via quantum shares.

However, for both cases conventional QSS protocols generally assume that the participants possess quantum capabilities and can perform quantum operations on their shares. This raises a natural question: can secret sharing be realized in an intermediate operational model where the dealer and the other participants are restricted to perform only simple quantum operations to run the protocol? This question motivates the semi-quantum paradigm. In 2007, Boyer \emph{et al.}~\cite{PhysRevLett.99.140501} introduced the concept of Semi-Quantum Key Distribution (SQKD), in which Alice possesses full quantum capabilities, whereas Bob is restricted to a limited set of operations, such as measuring qubits in the computational ($Z$) basis and  prepare fresh qubit in the computational basis and transmit it. Building upon this paradigm the first Semi-Quantum Secret Sharing (SQSS) protocol~\cite{PhysRevA.82.022303} have subsequently been developed, where the classical participants possess restricted quantum capabilities: they can measure qubits in the computational basis $\{\ket{0}, \ket{1}\}$, reorder qubits using delay lines, prepare fresh computational-basis states, or transmit qubits undisturbed.

\subsection{Related Works}
A variety of SQSS protocols have been established to lower the technical barrier for classical participants. Early efforts explored product-state protocols by Li \emph{et al.}~\cite{Li2018} and simplified particle-rearrangement schemes by Xin \emph{et al.}~\cite{doi:10.1142/S0217732324500846}, followed by Hou \emph{et al.}'s hybrid architecture~\cite{Hou2024Circular} employing single and GHZ states under a cyclic transmission framework. To address practical deployment challenges, Li \emph{et al.}~\cite{Li2024Collective} introduced decoherence-free states to mitigate quantum noise, and later Tian \emph{et al.} developed Bell-state encoding strategies~\cite{Tian2021}, which were subsequently fortified against specific security vulnerabilities~\cite{He2024}. Together with broader efforts across the field, these developments continue to streamline quantum operations for resource-constrained classical entities.

The work ``Security Analysis and Improvement of a Semi-Quantum Secret Sharing Protocol with Specific Bits Based on a Third Party'' by He \emph{et al.}~\cite{He2026} presents a SQSS protocol for sharing a classical secret among two parties, namely Bob and Charlie, with Alice as the dealer.  The protocol is proposed as an improvement over the scheme presented by Ma \emph{et al.}~\cite{Ma_2024}, which is vulnerable to Double CNOT attack (DCNA). In the protocol of He \emph{et al.}~\cite{He2026}, the participants are assumed to have limited quantum capabilities, whereas two fully quantum-capable third parties, referred to as $\mathrm{Trent}_1$ and $\mathrm{Trent}_2$ (jointly Trent), assist the dealer Alice in preparing the quantum shares. The protocol aims to achieve secure secret sharing while reducing the quantum resources required of the communicating parties.

In the aforementioned protocol~\cite{He2026}, the possibility that a fully quantum-capable third party may collude with a classical participant during the secret reconstruction phase is not considered. Specifically, the protocol is vulnerable to a scenario where the two third parties, Trent, represent a single adversary colluding directly with the classical participant, Bob. Under this assumption, Trent can launch an intercept-and-resend attack on the quantum channels between Alice and participant Charlie. Specifically, Trent intercepts the transmitted quantum shares, measures them on the computational basis $(Z)$, and subsequently forwards the corresponding measurement results to the intended recipient. Since the protocol lacks sufficient mechanisms to detect such malicious behavior, Trent remains undetected throughout the execution of the protocol. Consequently, this vulnerability may compromise the confidentiality of the shared secret and pose a potential threat to the overall security of the protocol.

\subsection{Our Contribution}
% We identify and analyze a critical security vulnerability in the SQSS protocol proposed by He \emph{et al.}~\cite{He2026}, demonstrating its susceptibility to eavesdropping and manipulation strategies that compromise the underlying security guarantees. To resolve these flaws, we propose an enhanced version of the SQSS protocol that strictly preserves the operational definition of semi-quantumness, ensuring classical participants require only basic $Z$-basis measurement and reflection capabilities while guaranteeing robust key distribution. We formally prove that the proposed scheme prevents a wide class of active and passive threats, including DCNA, Decoy State Substitution, and participant collusion attacks by leveraging diagonal decoy states and permuted basis positions with delayed operational mode announcements to guarantee information-theoretic security.
We identify a critical security vulnerability in the SQSS protocol of He \emph{et al.}~\cite{He2026} that exposes it to eavesdropping and manipulation. To resolve these flaws, we propose an enhanced SQSS protocol that restores robust security keeping the definition of SQSS protocol intact.
The main contributions of this paper are summarized as follows:
\begin{itemize}
    \item \textbf{Vulnerability Analysis of Existing Protocol:} We uncover a critical security flaw in the SQSS protocol by He~et~al.~\cite{He2026}, demonstrating how undetectable collusion between a dishonest classical participant (Bob) and untrusted quantum third parties (Trent) allows them to extract the secret information of another participant (Charlie) without detection, ultimately enabling them to successfully generate the shared secret bit without any contribution from Charlie.
    
    \item \textbf{Enhanced SQSS Protocol Design:} To address these vulnerabilities, we propose an improved SQSS protocol that strictly adheres to the operational definition of semi-quantumness, requiring classical participants to possess only $Z$-basis measurement and reflection capabilities while facilitating secure key distribution.
    
    \item \textbf{Provable Security Guarantees:} We formally prove that the proposed scheme provides robust protection against a broad spectrum of active and passive attacks, including Double CNOT attack (DCNA), decoy state substitution, and participant collusion. Information-theoretic security against any collusion whether involving classical participant or quantum entities, is strictly guaranteed through the integration of diagonal decoy states, permuted basis positions, and delayed operational mode announcements.
\end{itemize}

\section{Analysis of He \emph{et al.}'s Protocol}
The objective of the SQSS protocol is for Alice (a semi-quantum dealer) to share a classical secret binary sequence $S = \{s_1, s_2, \dots, s_n\}$ with Bob and Charlie (classical parties) using the assistance of two quantum servers, $\mathrm{Trent}_1$ and $\mathrm{Trent}_2$.
A brief overview of the main steps involved in the protocol proposed by He \emph{et al.}~\cite{He2026} is provided below. The reader is referred to the original work for a detailed description.

\paragraph{\textbf{Step I-1 \emph{(Setup and Classical Encryption):}}}
Alice randomly prepares two binary sequences: $K_B = \{k_1^B, k_2^B,            \dots, k_n^B\} \text{ and }
         K_C = \{k_1^C, k_2^C, \dots, k_n^C\}$, where $k_i^B, k_i^C \in \{0, 1\}$. 
 She computes the classical encryption key $K := K_B \oplus K_C$.
 
Alice encrypts the secret sequence $S$ into a ciphertext sequence $S_T:=S\oplus K$
% \[
%     S_T = S \oplus K = \{s_1 \oplus k_1^B \oplus k_1^C, \dots, s_n \oplus k_n^B \oplus k_n^C\}
% \]
and sends it to $\mathrm{Trent}_1$ through a secure classical channel.

\paragraph*{\textbf{Step I-2 \emph{(Quantum Encoding and $\mathrm{\mathbf{Trent}}_1$ Decoy Insertion):}}}
Upon receiving $S_T$, $\mathrm{Trent}_1$ encodes each bit $s_i^T \in S_T$ into an entangled EPR pair sequence $T = \{t_1, t_2, \dots, t_n\}$, according to the rule
\begin{equation*}
\ket{t_i}_{BC} = 
\begin{cases} 
\frac{1}{\sqrt{2}}(\ket{00} + \ket{11})_{BC}, & \text{if } s_i^T = 0 \\ 
\frac{1}{\sqrt{2}}(\ket{01} + \ket{10})_{BC}, & \text{if } s_i^T = 1 
\end{cases}
\end{equation*}
 $\mathrm{Trent}_1$ splits $T$ into two single-particle sequences, $T_B=\{t^B_1,t^B_2,\ldots,t^B_n\}$ and $T_C=\{t^C_1,t^C_2,\ldots,t^C_n\}$. He inserts $m$ randomly selected decoy particles from $\{\ket{0},\ket{1},\ket{+},\ket{-}\}$ into $T_B$ and $T_C$, forming $T'_B$ and $T'_C$, respectively, and sends them to $\mathrm{Trent}_2$.

\paragraph*{\textbf{Step I-3 \emph{($\mathrm{\mathbf{Trent}}_2$ Decoy Insertion):} }}
Fter recieving $T'_B$ and $T'_C$, $\mathrm{Trent}_2$ prepares his own set of $m$ decoy particles from $\{\ket{0}, \ket{1}, \ket{+}, \ket{-}\}$ and randomly inserts them into the sequences, yielding $T''_B$ and $T''_C$, respectively, before sending them to Alice. Only $\mathrm{Trent}_2$ knows the positions and states of his own decoys. 

\paragraph*{\textbf{Step I-4 \emph{(Alice's Key Injection and Distribution):} }}
$\mathrm{Trent}_1$ and $\mathrm{Trent}_2$ publish the exact index positions of their respective decoy particles. Alice prepares quantum state sequences $\ket{K_B} = \{\ket{k_i^B}\}_{i}$ and $\ket{K_C} = \{\ket{k_i^C}\}_{i}$ representing her classical keys $K_B$ and $K_C$. %using the $Z$-basis:
% \begin{equation*}
% \ket{k_i^{B}}, \ket{k_i^{C}}  = 
% \begin{cases} 
% \ket{0}, & \text{if } k_i^{B}, k_i^{C} = 0 \\ 
% \ket{1}, & \text{if } k_i^{B}, k_i^{C} = 1 
% \end{cases}
% \end{equation*}
% Alice leaves the decoy perticles untouched. 
She randomly inserts the particles of $\ket{K_B}$ and $\ket{K_C}$ into the random positions of $T''_B$ and $T''_C$, respectively, forming the final sequences $T'''_B$ and $T'''_C$ and sends $T'''_B$ to Bob and $T'''_C$ to Charlie.

\paragraph*{\textbf{Step I-5 \emph{(Eavesdropping Detection and Verification):} }}
Alice declares the original positions of the inserted decoy particles to Bob and Charlie. For each decoy particle, Bob and Charlie randomly perform a \texttt{MEASURE} or \texttt{REFLECT} operation as follows.
\begin{itemize}
    \item \textbf{Case 1 (\texttt{MEASURE}):} Bob (Charlie) measures the particle in the $Z$-basis. The corresponding Trent verifies the outcome according to the preparation basis. \item \textbf{Case 2 (\texttt{REFLECT}):} Bob (Charlie) returns the particle undisturbed through an optical delay line. The corresponding Trent measures it in its preparation basis to verify its integrity.
\end{itemize}
If the calculated error rate exceeds a predefined threshold, the protocol aborts.

\paragraph*{\textbf{Step I-6 \emph{(Measuring and Secret Reconstruction):} }}
Bob and Charlie measure all remaining non-decoy particles in the $Z$-basis, obtaining raw measurement lists $MR_B$ and $MR_C$, respectively. Alice then publicly reveals the exact coordinates of the keys $\ket{K_B}$ and $\ket{K_C}$. 
Bob (Charlie) extracts $K_B\ (K_C)$ from $MR_B\ (MR_C)$. They collaborate to reconstruct $K = K_B \oplus K_C$ and $S_T$ from the remaining measurement results. Using $K$, they decrypt $S_T$ to obtain the original secret sequence $S = S_T \oplus K$.%, while checking correlations on the remaining EPR particles ($MR'_B \oplus MR'_C$) to guarantee correctness.

% \section{The Inside-Collusion Attack}
% A fatal flaw exists when a single adversarial entity controls the classical participant Bob as well as both quantum servers, $\mathrm{Trent}$ and $\mathrm{Trent}_2$ ($\text{Bob} = \text{$\mathrm{Trent}$} = \text{$\mathrm{Trent}_2$}$). This allows Bob to bypass security entirely, eavesdrop on Charlie's line, and steal the secret without detection.

\section{The Inside-Collusion Attack \label{sec:Attack_scenario}}
A fatal flaw exists when a single adversarial entity controls one of the classical participant, say Bob, as well as both quantum servers, $\mathrm{Trent}_1$ and $\mathrm{Trent}_2$ (that is, $\text{Bob} = \text{$\mathrm{Trent}_1$} = \mathrm{Trent}_2$)%\footnote{While the vulnerability can also be exploited through collusion between any Trent and Bob, we present a more generalized scenario to clearly illustrate its full impact.}
. This allows Bob to bypass security entirely by executing a universal $Z$-basis measurement sweep on Charlie's line, using his status as the verifying authority to falsify error checks.
% =====================
% \footnotetext[1]{}
% =====================

\subsection{The Adversary's Strategy and Universal Z-Sweep}
Since Alice randomly inserts her key sequences $K_B$ and $K_C$ among the decoy particles, leaving the original decoy positions unknown to Bob during transmission, Bob disregards tracking individual particle positions and instead adopts the following strategy during Step I-4.

\begin{enumerate}
    \item \textbf{Universal Interception and Measurement:} When Alice sends $T'''_C$ to Charlie, Bob intercepts the entire quantum stream 
    % . Without attempting to distinguish between decoy particles, information particles, or Alice's keys, Bob
    and measures \emph{every single qubit} in the $Z$-basis.
    \item \textbf{Impact on the Quantum States:} 
    \begin{itemize}
        \item For Alice's key bits ($\ket{K_C}$) and the information carriers ($T_C$), which would be measured in the $Z$-basis by Charlie in Step I-6, Bob's measurement gives the exact bit values with $100\%$ accuracy without causing detectable state collapse.
        \item For the diagonal decoy particles ($\ket{+}, \ket{-}$) originally generated by the Trents, Bob's $Z$-basis measurement forces an irreversible collapse into either $\ket{0}$ or $\ket{1}$ with a $50\%$ probability. 
       \item The $Z$-basis measurement of Bob does not disturb the decoy particles $\{\ket{0}, \ket{1}\}$.
    \end{itemize}
    \item \textbf{Forwarding to Charlie:} Bob logs the measurement results for the entire sequence in order and forwards the post-measurement states (which now consist purely of $\ket{0}$ and $\ket{1}$ states) to Charlie. Charlie receives them with no knowledge of the interception.
\end{enumerate}

\subsection*{Falsification of the Security Check and Secret Theft}
During the eavesdropping detection phase (Step I-5), as the verifying authority (Trent) is colluding with the attacker (Bob), detection is completely neutralized as follows.
\begin{itemize}
    \item \textbf{Bypassing Case 1 (MEASURE):} When Charlie performs a $Z$-basis measurement on a decoy particle, he measures the state already measured by Bob in the same basis. When Charlie sends his measurement result to Trent for verification, Trent (who is Bob) automatically agrees with Charlie's result, reporting an error rate of $0\%$.
    \item \textbf{Bypassing Case 2 (REFLECT):} When Charlie reflects a decoy particle back untouched, it arrives at Trent in a collapsed $Z$-basis state. Instead of performing a legitimate check that would reveal the collapse, Trent simply ignores the state discrepancy and falsely reports that the state was preserved perfectly. 

    Thus, being the sole verifying authority enables Trent to trivially pass the quantum bit error rate (QBER) check.
    
    % Furthermore, because Trent generates the decoy basis used to construct $T''_C$, he possesses complete knowledge of the decoy sequence, enabling him to trivially pass the quantum bit error rate (QBER) check.
\end{itemize}

\subsection*{Key Reconstruction Dishonestly}
In Step I-6, Alice publicly reveals the exact positions at which she randomly inserted $K_C$, allowing Charlie to extract the $K_C$ bits from his measurement results, while Bob uses Alice’s announced positions to identify the corresponding indices in his universal $Z$-basis measurement record. Bob then recovers the $K_C$ sequence and combines it with the directly received $K_B$ sequence to reconstruct the complete encryption key $K = K_B \oplus K_C$.
   
Bob decrypts the classical ciphertext $S_T$, which is already known to $\mathrm{Trent}_1$ and also can be recovered from his remaining measurement records, to obtain Alice's original secret $S = S_T \oplus K$. Consequently, Charlie is successfully excluded from the secret-sharing process, while Alice remains entirely unaware of the security breach.

A detailed example demonstrating the correctness of this attack is provided in Appendix~\ref{appendix:A}.

\section{The Improved SQSS Protocol}

Our proposed scheme directly addresses the vulnerability identified in the previous protocol by minimizing the authority granted to trusted third parties, thereby mitigating collusion risks with rogue participants. Specifically, the protocol is structured to allow the dealer, Alice, to actively intervene at key stages, preventing Trent from deceiving either the classical parties or Alice herself. Unlike the previous protocol, this scheme incorporates only one trusted third party, Trent. For security reasons, the authority to insert decoy states, retain knowledge of their positions, and perform the QBER verification is restricted exclusively to Alice, effectively eliminating key vulnerabilities. 

Prior to protocol execution, Alice initializes the system by securely sharing secret permutations $\Pi_B$ and $\Pi_C$ with Bob and Charlie, respectively. This classical pre-sharing step maintains full compliance with the semi-quantum framework, as classical participants are restricted only in their quantum capabilities, not in classical index processing. The detailed protocol is as follows:
\subsection*{1. Protocol Architecture and Initial Setup}

\begin{itemize}[leftmargin=*]
    \item \textbf{Participants:}
    \begin{itemize}
        \item \textbf{Alice ($A$):} The dealer who holds the classical secret $S$.
        \item \textbf{Bob ($B$) and Charlie ($C$):} Classical (semi-quantum) participants capable only of measuring or preparing in the computational $Z$-basis ($\{|0\rangle, |1\rangle\}$) or directly reflecting particles.
        \item \textbf{$\mathrm{\mathbf{Trent}}$ ($T$) :} Fully quantum third party equipped with arbitrary quantum state preparation and entanglement measurement capabilities.
    \end{itemize}
    \item \textbf{Secret Sequence:} $S = \{s_1, s_2, \dots, s_n\}$, where $s_i \in \{0, 1\}$.
\end{itemize}

\subsection*{2. Phase-Wise Protocol Execution}

\paragraph{\textbf{\emph{Phase 1:}} Classical Key Generation and Ciphertext Construction}
\begin{enumerate}[label=\textbf{Step 1.\arabic*}, leftmargin=*]
    \item \textbf{Local Key Generation:} Alice randomly generates two independent binary sequences of length $n$
    \[
    K_B = \{k_1^B, \dots, k_n^B\}, K_C = \{k_1^C, \dots, k_n^C\} \text{where } k_i^B, k_i^C \xleftarrow\$ \{0, 1\}.
    \]
    \item \textbf{Master Key XOR:} Alice computes the combined master key sequence
    \[
    K = K_B \oplus K_C = \{k_1, k_2, \dots, k_n\},\text{where } k_i = k_i^B \oplus k_i^C.
    \]
    \item \textbf{Secret Encryption:} Alice masks her secret $S$ using $K$ to create the ciphertext string
    \[
    S_T = S \oplus K = \{s^T_1, s^T_2, \dots, s^T_n\}, \text{where } s^T_i = s_i \oplus k_i,
    \]
    and then securely transmits $S_T$ to quantum third-party $\mathrm{Trent}$.
\end{enumerate}

\paragraph{\textbf{\emph{Phase 2:}} Quantum Entanglement Encoding by $\mathrm{Trent}$}
\begin{enumerate}[label=\textbf{Step 2.\arabic*}, leftmargin=*]
    \item \textbf{Bell State Encoding:} Upon receiving $S_T$, $\mathrm{Trent}$ constructs $n$ pairs of entangled sequence $T = \{|t_1\rangle, |t_2\rangle, \dots, |t_n\rangle\}$ based on $S_T$ as follows
    \[
    |t_i\rangle := \begin{cases} 
    |\Phi^+\rangle = \frac{1}{\sqrt{2}}(|00\rangle + |11\rangle)_{BC}, & \text{if } s^T_i = 0 \\ 
    |\Psi^+\rangle = \frac{1}{\sqrt{2}}(|01\rangle + |10\rangle)_{BC}, & \text{if } s^T_i = 1 .
    \end{cases}
    \]
    \item \textbf{Sequence Splitting:} $\mathrm{Trent}$ splits the entangled pairs into two single-particle streams.
    \begin{itemize}
        \item $T_0$: Contains particle $1$ of every pair $|t_i\rangle$.
        \item $T_1$: Contains particle $2$ of every pair $|t_i\rangle$.
    \end{itemize} 
    \item $\mathrm{Trent}$ transmits both quantum sequences $T_0$ and $T_1$ directly to Alice.
    % Then Alice decides who will get $T_0$ and who gets $T_1$.

    \item Trent prepares four qubit sequences, each consisting entirely of one of the states $\{\ket{0}, \ket{1}, \ket{+}, \ket{-} \}$
    % \[
    % |+\rangle = \frac{1}{\sqrt{2}}(|0\rangle + |1\rangle), \quad |-\rangle = \frac{1}{\sqrt{2}}(|0\rangle - |1\rangle),\]
     and sends them to Alice.
\end{enumerate}
% Note:  It is not fixed who will receive $T_0$ or $T_1$.

\paragraph{\textbf{\emph{Phase 3:}} Quantum Key Injection and Decoy Insertion by Alice}
\begin{enumerate}[label=\textbf{Step 3.\arabic*}, leftmargin=*]
    \item \textbf{Quantum Key Preparation:} Based on $K_B$ and $K_C$, Alice prepares single-qubit quantum states in the $Z$-basis:
    \[
    |K_B\rangle = \{|k_1^B\rangle, |k_2^B\rangle, \dots, |k_n^B\rangle\}, \quad |K_C\rangle = \{|k_1^C\rangle, |k_2^C\rangle, \dots, |k_n^C\rangle\}
    \]
    \item \textbf{Key Injection:} Alice decides who will get $T_0$ and who gets $T_1$. The sequence of particles Bob gets renamed as $T_B$ and the same for Charlie is $T_C$. Alice interposes $|K_B\rangle$ into $T_B$ and $|K_C\rangle$\ into $T_C$ at randomized indices, creating expanded sequences $T'_B$ and $T'_C$.
    \item \textbf{{Decoy State Insertion:}} Alice inserts $m$ decoy particles chosen randomly from the BB84 basis set $\{|0\rangle, |1\rangle, |+\rangle, |-\rangle\}$
    into the $T'_B$ and $T'_C$ to obtain $T''_B$ and $T''_C$, a longer sequence.  
    % \textcolor{red}{Here Trent informs Alice about the decoy positions.}

    \item Alice sends sequence $T''_B$ to Bob and $T''_C$ to Charlie across quantum channels. 
\end{enumerate}

\paragraph{\textbf{\emph{Phase 4:}} Semi-Quantum Operations and Sequence Permutation}

\begin{enumerate}[label=\textbf{Step 4.\arabic*}, leftmargin=*]
    \item \textbf{Decoy Notification:} Once receipt of $T''_B$ and $T''_C$ is confirmed, Alice tells the index positions designated as decoy slots to Bob and Charlie via classical channel.
     \item \textbf{Sequence Rearrangement:} Bob and Charlie apply their pre-shared secret permutations $\Pi_B$ and $\Pi_C$ (shared secretly by Alice prior to protocol execution) strictly on the decoy sequence.
    % to prevent intercept-resend attacks.
    
    \item \textbf{Operation Choice:} For every decoy position, Bob and Charlie randomly and independently select one of two semi-quantum modes:
    \begin{itemize}
        \item {$\mathrm{\mathbf{SIFT}}$ Mode:} Measure the qubit in the $Z$-basis and forward the resulting state to $\mathrm{Trent}$.
        \item {$\mathrm{\mathbf{CTRL}}$ Mode:} Directly reflect the particle back to $\mathrm{Trent}$ without disturbance.
    \end{itemize}
    Bob and Charlie must use a delay line to maintain the ordering. Since $\Pi_B$ and $\Pi_C$ are unknown to Trent, the physical order of returned decoys is completely scrambled to any eavesdropper.

    % \item Bob and Charlie send the sequence to Trent.
\end{enumerate}

\paragraph{\textbf{\emph{Phase 5:}} Decoy Checking and Eavesdropping Verification}
\begin{enumerate}[label=\textbf{Step 5.\arabic*}, leftmargin=*]

    \item \textbf{Mode and Basis Announcement:} After Trent confirms the arrival of the decoy sequences, Alice announces the permuted basis sequences (applying $\Pi_B$ for the basis sequence of Bob, similar for Charlie) without revealing the actual state.
    % Then Trent measures the respective sequences received from Bob and Charlie.
    
    % \item  Alice publish the basis he selects to fill the decoy positions.
    % operation mode choices ($\text{$\mathrm{SIFT}$}$ vs. $\text{$\mathrm{CTRL}$}$) alongside their measurement results for all $Z$-basis decoy checks.

    \item Trent measures accordingly and declares the measurement results.

    \item Bob and Charlie announce their operation mode choices ($\text{$\mathrm{SIFT}$}$ vs. $\text{$\mathrm{CTRL}$}$) alongside their measurement results for all $Z$-basis decoy checks.
     % \item \textbf{Sequence Restoration:} Bob and Charlie un-shuffle their respective sequences using their inverse permutations $\Pi_B^{-1}$ and $\Pi_C^{-1}$.
    
    \item \textbf{QBER Computation:} Alice evaluates the Quantum Bit Error Rate (QBER) by comparing $\mathrm{Trent}$, Bob and Charlie's reported outputs against her initial decoy preparations ($|0\rangle, |1\rangle, |+\rangle, |-\rangle$).  
    % \textcolor{red}{Alice receives both the decoys from Trent and Bob's and Charlie's outputs from them. Then Alice checks if it matches or not in sufficient places to reach up to the QBER.}
    \item \textbf{Abort/Proceed Threshold:}
    \begin{itemize}
        \item If $\text{QBER} > \text{QBER}_{\text{threshold}}$, eavesdropping or a malicious third-party probe attack is detected; the protocol \textbf{aborts}.
        \item If $\text{QBER} \le \text{QBER}_{\text{threshold}}$, Alice, Bob, and Charlie proceed to secret reconstruction.
    \end{itemize}
\end{enumerate}

\paragraph{\textbf{\emph{Phase 6:}} Key Extraction and Secret Reconstruction}
\begin{enumerate}[label=\textbf{Step 6.\arabic*}, leftmargin=*]
    \item \textbf{Key Measurement and Extraction:} Bob and Charlie perform $Z$-basis measurements on the remaining particles, obtaining raw measurement lists $MR_B$ and $MR_C$, respectively.
    \item \textbf{Key Index Announcement:} Once the QBER is verified to be within the acceptable threshold, Alice publicly announces the specific positions corresponding to the key particles $|K_B\rangle$ and $|K_C\rangle$ to Bob and Charlie, respectively.% over the classical channel.
    \item \textbf{Classical Key Combination:} Bob and Charlie extract their respective classical key sequences $K_B$ and $K_C$ and communicate over an authenticated classical channel to perform bitwise XOR operations on their extracted keys    $K = K_B \oplus K_C $. The remaining measurement results are denoted as $MR'_B$ and $MR'_C$ for Bob and Charlie, respectively.
    \item \textbf{Final Secret Decryption:} Bob and Charlie collaborate to calculate $S_T=MR'_B\oplus MR'_C$. Using this $S_T$, Bob and Charlie reconstruct the original secret $S$
    \[
    S = S_T \oplus K.% = (S \oplus K) \oplus K .
    \]
\end{enumerate}
% ====================

% \begin{tcolorbox}[protocolbox, title={Protocol Architecture: Semi-Quantum Secret Sharing}]
%

\begin{figure}[ht]
\centering
\includegraphics[width=\textwidth]{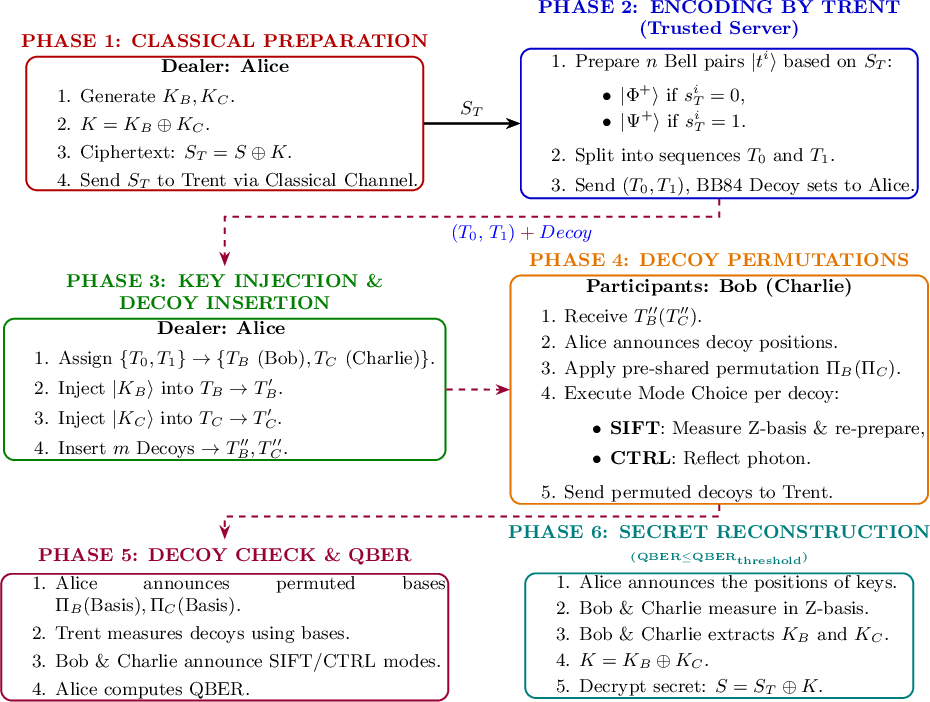}
\caption{Overall architecture of the proposed SQSS protocol. The dotted arrows denote quantum communication, while the solid black arrows denote classical communication among the participating entities.}
\label{fig:qss_protocol}
\end{figure}

% ============Protocol ends here===========

 % ==================================================
% \begin{figure}
%     \centering
%   \includegraphics[width = \textwidth]{SQSS_protocol.png}
% \end{figure}

\section{Security Analysis}
We now proceed to demonstrate that the proposed SQSS protocol is secure against eavesdropping attacks. To establish the comprehensive security of the proposed protocol, we first evaluate its resistance against the most severe internal threat model: the collusion between the semi-quantum participant Bob (which can consider any one party) and the quantum third-party $\mathrm{Trent}$ (denoted as the coalition $\mathcal{E}_{\{B, T\}}$). In QSS frameworks, internal participants possess structural advantages over external eavesdroppers (Eve), such as direct access to quantum channels, knowledge of operational modes, and execution authority over intermediate measurements. Consequently, proving security against a colluding coalition of internal entities represents the tightest upper bound on eavesdropping capabilities~\cite{10.3389/fphy.2022.1021113,deng2006circular}.

\begin{theorem}
    The proposed SQSS protocol is information-theoretically secure against the maximal internal collusion between the semi-quantum participant Bob $(B)$ and the quantum third-party $\mathrm{Trent}$ $(T)$.
\end{theorem}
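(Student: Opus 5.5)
We argue that the joint view of the coalition $\mathcal{E}_{\{B,T\}}$ is statistically independent of Charlie's key $K_C$, and hence of the secret $S$. The coalition's knowledge before the attack is $S_T$ (held by Trent), the full state of every $|t_i\rangle$ (Trent prepared them), Bob's own stream $T''_B$, and, after Step 6.2, $K_B$. Since $S = S_T \oplus K_B \oplus K_C$ and $K_C$ is uniform and independent of $(S_T, K_B)$, this one-time-pad structure means the coalition learns nothing about $S$ unless it learns something about $K_C$. The reconstruction step $S_T = MR'_B \oplus MR'_C$ adds nothing, because Trent already knows $S_T$. The entire proof therefore reduces to one claim: any strategy that extracts non-negligible information about $K_C$ from the stream $T''_C$ (Alice $\to$ Charlie, and the return leg Charlie $\to$ Trent) is detected by Alice's QBER test with probability approaching $1$ exponentially in $m$.

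\textbf{Verification is no longer the coalition's to fake.} The first step is to show that the falsification trick of Section~\ref{sec:Attack_scenario} is unavailable. In Steps 3.3 and 5.4, Alice alone knows the decoy states and positions and performs the comparison. Trent's declared outcomes in Step 5.2 are checked against Alice's records, as are Charlie's honest reports in Step 5.3. Reporting fake outcomes is then no better than guessing. I would show this by cases on Charlie's mode:
\begin{itemize}
\item For a CTRL decoy prepared in $|\pm\rangle$, Trent must announce the $X$ outcome. He does not know which physical slot corresponds to which basis announcement, because Alice announces bases permuted by $\Pi_C$ (Step 5.1) and Charlie's returned decoys are reordered by $\Pi_C$ (Step 4.2), and $\Pi_C$ is unknown to the coalition.
\item For a SIFT decoy, Charlie's $Z$ result is checked directly by Alice. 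Any disturbance of a $|\pm\rangle$ decoy therefore shows up as an error with constant probability.
\end{itemize}

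\textbf{General attack.} Next I would model the most general coalition attack on the forward leg as a unitary $U_E$ acting on each qubit of $T''_C$ and an ancilla:
\[
U_E|0\rangle|E\rangle=\alpha_0|0\rangle|e_{00}\rangle+\beta_0|1\rangle|e_{01}\rangle,\qquad
U_E|1\rangle|E\rangle=\beta_1|0\rangle|e_{10}\rangle+\alpha_1|1\rangle|e_{11}\rangle.
\]
A possible second unitary on the return leg must act without knowledge of the order fixed by $\Pi_C$, so it is applied blindly. I would then proceed in three steps:
\begin{enumerate}
\item Zero error on $Z$-decoys in SIFT mode forces $\beta_0=\beta_1=0$.
\item Zero error on $X$-decoys in CTRL mode forces $|e_{00}\rangle=|e_{11}\rangle$.
\item Consequently, the ancilla is in a product state with the key qubits $|k_i^C\rangle$ and carries no information about $K_C$.
\end{enumerate}
The same argument applied to DCNA-style two-CNOT attacks, and to the intercept-resend $Z$-sweep, gives an error rate of $1/4$ per $X$-decoy. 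Since the attacker cannot tell key slots from decoy slots during transmission (positions are announced only in Step 4.1 and Step 6.2), the pass probability is at most $(3/4)^{\Theta(m)}$.

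\textbf{Main obstacle.} I expect the hardest step to be rigorously handling the combination of Trent's ability to lie with a blind attack on the return leg. In particular, one must show that the permutation $\Pi_C$ makes Trent's announced $X$ outcomes uniformly random relative to Alice's records, so that fabricating results yields a $1/2$ error rate on CTRL $X$-decoys. I would prove this first, treating $\Pi_C$ as a uniformly random secret. Then I would combine it with the ancilla-decoupling argument, using a Chernoff bound over the $m$ decoys to conclude that the coalition's information on $S$ is negligible.
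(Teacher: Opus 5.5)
Your overall architecture is the paper's. You use the one-time-pad reduction to $K_C$, detection through $\mathrm{CTRL}$-mode $X$-decoys at rate $\frac12\cdot\frac12$ per decoy, and $\Pi_C$ as the reason Trent cannot line returned decoys up with their origin. For the universal $Z$-sweep your count reproduces the paper's $(3/4)^{m/2}$. (One small slip: $K_C$ is not independent of $(S_T,K_B)$ unless $S$ is uniform. What you need, and what the paper uses, is that $K_B\oplus K_C$ is uniform and independent of $(S,K_B)$, so $S$ is independent of $(S_T,K_B)$.)

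The trouble is where you go beyond the paper's intercept--resend analysis. Your step ``zero $\mathrm{CTRL}$ error on $X$-decoys forces $\ket{e_{00}}=\ket{e_{11}}$'' holds only for a verifier who does not hold the probe. Here the verifier is Trent, who holds every probe. Take the forward attack to be a CNOT from each qubit of $T''_C$ into a fresh probe. It copies $K_C$ exactly, is invisible in $\mathrm{SIFT}$ mode, and turns each $\ket{\pm}$ decoy into $\ket{\Phi^\pm}$ with its probe. A Trent who knew which probe belongs to a returned $\mathrm{CTRL}$ particle could undo the CNOT and announce the correct $X$ value every time. So the whole burden is a pairing lemma, and your sketch aims at the wrong target:
Trent \emph{does} know the basis of every returned particle, because Step 5.1 announces the bases in the returned order precisely so he can measure. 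What $\Pi_C$ hides is the forward slot, hence the probe, of each returned particle. What must be shown is neither that fabricated announcements are wrong half the time (immediate, since the $\pm$ values are never revealed) nor that his announcements are uniformly random (false if the coalition leaves the stream alone). It is an information--disturbance bound for a verifier who processes the returned decoys jointly with probes scrambled by an unknown uniform $\Pi_C$. That bound must allow, e.g., $Z\otimes Z$ parity tests, which leave the correct pair intact but destroy the labels of wrong pairs. The paper only asserts this alignment claim, and your proposal does not prove it either.

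Second, the coalition also supplies the decoys. In Step 2.4 Trent prepares the four sequences, and Alice, being classical, only labels and inserts them. So ``Alice alone knows the decoy states'' is false, and your unitary analysis, which presupposes genuine BB84 decoys, does not cover the coalition's full power. Suppose Trent ships $\ket0$ in place of every $\ket+$ and $\ket1$ in place of every $\ket-$. Then a universal $Z$-sweep disturbs nothing, and Trent answers every $\mathrm{CTRL}$ $X$-check correctly by reading the returned particle in $Z$; no per-decoy comparison ever shows a mismatch. The only trace is that Charlie's $\mathrm{SIFT}$ outcomes on $X$-labelled decoys are deterministic instead of uniform.

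Your $\mathrm{SIFT}$ bullet gets this backwards. A probe diagonal in the $Z$ basis (the $Z$-sweep, the CNOT) leaves Charlie's $\mathrm{SIFT}$ outcome on a genuine $\ket{\pm}$ decoy uniformly distributed, so $\mathrm{SIFT}$ never registers an error on $X$-decoys. It can reveal substitution only through a statistical uniformity test, which your verification step lacks. To close this you must add that test to Alice's check. You must also jointly bound the trade-off between the $Z$-bias of the $X$-labelled decoys and Trent's success in guessing their labels in $\mathrm{CTRL}$ mode. The paper handles this only in its separate decoy state substitution subsection, not inside the proof of this theorem.
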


\begin{proof}
    \textit{1. Information-Theoretic Blindness of Ciphertext $S_T$: } 
    
    Let $S \in \{0, 1\}^n$ be the secret message, $K_B, K_C \xleftarrow{\$} \{0, 1\}^n$ be independent, uniformly distributed random key strings, and $K = K_B \oplus K_C$ be the master key.The ciphertext delivered to $\mathrm{Trent}$ is $S_T = S \oplus K$. The conditional Shannon entropy of the secret $S$ given the ciphertext $S_T$ and Bob's key $K_B$ is evaluated as
    
    $$H(S \mid S_T, K_B) = H(S \mid S \oplus K_B \oplus K_C, K_B).$$
    
    Since $K_C$ is selected uniformly at random from $\{0, 1\}^n$ by Alice and kept private, $K_B \oplus K_C$ remains a perfectly uniform random variable over $\{0, 1\}^n$, even when $K_B$ is known
    
    $$P(K_B \oplus K_C = k \mid K_B = k_b) = \frac{1}{2^n}, \quad \forall k \in \{0, 1\}^n.$$
    
    Thus, by the properties of the One-Time Pad,
    $$H(S \mid S_T, K_B) = H(S).$$
    
    This proves that the colluding coalition $\mathcal{E}_{\{B, T\}}$ gains zero mutual information about $S$ from $S_T$ unless they can extract Charlie's private key $K_C$
    $$I(S ; S_T, K_B) = H(S) - H(S \mid S_T, K_B) = 0.$$

    \textit{2. Eavesdropping Model on Charlie's Key ($K_C$): }
    
    To extract $K_C$, the coalition $\mathcal{E}_{\{B, T\}}$ must measure or intercept the quantum stream $T''_C$ transmitted from Alice to Charlie during Phase 3, or the permuted sequence returned by Charlie during Phase 4. Let the quantum sequence sent to Charlie be decomposed into disjoint position index sets:$$\mathbb{I}_{\text{total}} = \mathbb{I}_{\text{decoy}} \cup \mathbb{I}_{\text{entangle}} \cup \mathbb{I}_{\text{key}, C}$$where $\vert{} \mathbb{I}_{\text{key}, C}\vert{} = n$, $\vert{}\mathbb{I}_{\text{decoy}}\vert{} = m$, and $\vert{} \mathbb{I}_{\text{total}}\vert{} = N = 2n + m$.
\medskip

    \textit{3. Undistinguishable Key Slots and Detection Probability: }
    
    During the quantum transmission phases (Phases 3 and 4), the exact position sets $\mathbb{I}_{\text{key}, C}$ and $\mathbb{I}_{\text{decoy}}$ are known exclusively to Alice. Alice reveals $\mathbb{I}_{\text{key}, C}$ only in Step 6.1 after the QBER check succeeds. Suppose $\mathrm{Trent}$ or Bob, attempts an intercept-measure-resend attack on $k$ arbitrary quantum positions in $T''_C$ to guess $K_C$. Then 

    \begin{itemize}
    \item{\textit{Probability of Hitting a Decoy Slot}:}
    
    Since indices are uniformly randomized, the probability that a single intercepted particle belongs to the decoy set $I_{\text{decoy}}$ is:    $$P_{\text{decoy}} = \frac{m}{N}.$$

    \item \textit{Error Generation on $X$-Basis Decoys}:
    
    Decoy states prepared in the diagonal basis $X = \{\vert{}+\rangle, \vert{}-\rangle\}$ collapse to computational basis states $\{\vert{}0\rangle, \vert{}1\rangle\}$ under any $Z$-basis measurement by $\mathcal{E}_{\{B, T\}}$. If Charlie chooses $\mathrm{CTRL}$ mode (direct reflection) on an $X$-basis decoy particle (which occurs with probability $\Pr(\text{$\mathrm{CTRL}$}) \approx \frac{1}{2}$), the expected fidelity $F$ of the returned state measured in $X$-basis by $\mathrm{Trent}$ in Phase 5 is $F =  \frac{1}{2}.$
    This induces an error probability of $e = \frac{1}{2}$ for every measured $X$-basis decoy. 
    \medskip

    \item \textit{Cumulative Detection Probability \emph{($\Pr(\text{detect})$):}}
    For $m_X \approx \frac{m}{2}$ decoy particles prepared in the $X$-basis, if the coalition intercepts the entire sequence ($k = N$), the probability of evading detection across all $\mathrm{CTRL}$-mode $X$-basis decoys is bounded by
    $$\Pr(\text{escape}) = \left( 1 - \Pr(\text{$\mathrm{CTRL}$}) \cdot e \right)^{m_X} = \left( 1 - \frac{1}{2} \cdot \frac{1}{2} \right)^{\frac{m}{2}} = \left( \frac{3}{4} \right)^{\frac{m}{2}}.$$
    
    Therefore, the probability of detecting the colluding coalition $\mathcal{E}_{\{B, T\}}$ approaches unity exponentially as $m$ increases $$\Pr(\text{detect}) = 1 - \Pr(\text{escape}) = 1 - \left( \frac{3}{4} \right)^{\frac{m}{2}} \xrightarrow{m \to \infty} 1.$$
    Hence, detection is guaranteed as $m$ increases.    
\medskip

     \item \textit{Security Under Secret Local Permutation ($\Pi_C$):}
     Before Charlie returns his decoy sequence to $\mathrm{Trent}$ in Phase 4, he applies his secret local permutation $\Pi_C$:$$\widetilde{T}''_{C,\text{Decoy}} = \Pi_C(T''_{C, \text{Decoy}})$$Since $\Pi_C$ is a bijection known only to Alice and Charlie ($\Pi_C \in S_{N}$ where $S_{N}$ is the symmetric group of degree $N$), $\mathrm{Trent}$ receives a randomized sequence ordering. Even if $\mathrm{Trent}$ measures all returning particles, he cannot align any measurement outcome with its original position slot without knowledge of $\Pi_C^{-1}$. Any attempt by $\mathrm{Trent}$ to reorder the particles introduces systematic positional mismatches, driving the calculated QBER high enough to abort the protocol.

\end{itemize}

Because $I(S ; S_T, K_B) = 0$ and any active attempt by $\mathcal{E}_{\{B, T\}}$ to intercept $K_C$ is detected with probability $\Pr({\text{detect}}) = 1 - (\frac{3}{4})^{\frac{m}{2}}$, the coalition $\mathcal{E}_{\{B, T\}}$ cannot extract the secret $S$ without doing the abort in Phase 5 of the proposed protocol.  Hence, the claimed result follows. \qed
\end{proof}

As the primary objective of our attack is to consider the internal collusion between Bob and Trent, we must also consider this scenario here. According to the protocol steps described above, even if Bob and Trent collude, they cannot obtain any information about Charlie’s secret permutation $\Pi_C$. Consequently, they cannot gain any information about Charlie’s secret decoy sequence, since the permutation $\Pi_C$ remains unknown to both parties.

\subsection{Double CNOT Attacks (DCNA)}

In the following security analysis, we assume that the adversary may be either Bob, Charlie, or any external eavesdropper. We collectively denote such an adversary by Eve. Without loss of generality, we consider the case where Eve applies a CNOT operation to the quantum particles transmitted from Alice to Bob, and a second CNOT is applied when the particles travel from Bob to Trent. Similar situation for the participant Charlie also.

Let an adversary Eve attempt to extract computational $Z$-basis key information from a traveling quantum sequence $T''_B$ using an auxiliary probe qubit $\ket{E}_E$ initialized to $\ket{0}_E$. Eve applies a Unitary Controlled-NOT operation $\hat{U}_{\text{CNOT}}$, where Bob's (Charlie's) particle serves as the control qubit and the probe serves as the target qubit
\begin{equation*}
\hat{U}_{\text{CNOT}} \ket{x}_{B(C)} \ket{0}_{E_{1(2)}} = \ket{x}_{B(C) }\ket{x}_{E_{1(2)}}, \quad \forall x \in \{0, 1\}
\end{equation*}

Now we show that the proposed SQSS protocol is secure against such Double CNOT Attacks (DCNA).

% \begin{theorem}[Security Against Double CNOT Attack]
% \label{thm:dcna_attack}
% The proposed semi-quantum secret sharing (SQSS) protocol is unconditionally secure against Double CNOT Attacks (DCNA). When an adversary applies two sequential CNOT operations-one on the forward path and one on the return path—the attack induces a deterministic $50\%$ error rate on $X$-basis decoys in $\mathrm{CTRL}$ mode and a $50\%$ error rate in $\mathrm{SIFT}$ mode, yielding an expected error count of $\mathbb{E}[X_{\text{err}}] = m/2$ across $m$ diagonal decoy particles and a detection probability $P_{\text{detect}} = 1 - 2^{-m}$.
% \end{theorem}
% \medskip

% \begin{proof}

% \noindent Let an adversary Eve initialize two auxiliary probe qubits $\ket{E_1}_{E_1}$ and $\ket{E_2}_{E_2}$ to state $\ket{00}_{E_1 E_2}$.
Adversary Eve executes the Double CNOT attack as follows:
\begin{itemize}
    \item \textbf{Forward CNOT ($\hat{U}_{\mathrm{CNOT}}^{(f)}$):} Applied on traveling particle $B(C)$ and probe $E_{1(2)}$ during the forward transmission [Alice $\to$ Bob (Charlie)].
    \item \textbf{Bob (Charlie)'s Operation:} Bob (Charlie) applies either $\mathrm{CTRL}$ mode (direct reflection) or $\mathrm{SIFT}$ mode ($Z$-basis measurement and re-preparation).
    \item \textbf{Backward CNOT ($\hat{U}_{\text{CNOT}}^{(b)}$):} Applied on returning particle $B (C)$ and the same probe $E_{1(2)}$ from forward case, during the backward transmission [Bob (Charlie) $\to$ Trent].
\end{itemize}

\noindent\textbf{1. Computational Basis ($Z$-Basis Decoy States):} For the decoy particles prepared in the computational basis $\{\ket{0}_*, \ket{1}_*\}$, the transformation yields:
    \begin{equation*}
    \hat{U}_{\text{CNOT}} \ket{0}_* \ket{0}_{E_{1(2)}} = \ket{0}_* \ket{0}_{E_{1(2)}}, \quad \hat{U}_{\text{CNOT}} \ket{1}_* \ket{0}_{E_{1(2)}}= \ket{1}_* \ket{1}_{E_1(2)}.
    \end{equation*}
    The probe extracts the classical bit value without inducing any changes on the main qubit. These bits are easily copied using a CNOT attack. However, as Eve does not know the information about the position of such decoy states, he gains no information after the first CNOT operation.

\noindent\textbf{2. Forward Pass Transformation:}
For a particle prepared in the diagonal basis state $\ket{\psi}_*\in \{\ket{+}, \ket{-}\}$, the first CNOT operation produces the composite state
\begin{equation*}
\hat{U}_{\text{CNOT}}^{(f)} \ket{\pm}_*\ket{0}_{E_{1(2)}} = \frac{1}{\sqrt{2}} \left( \ket{0}_*\ket{0}_{E_{1(2)}} \pm \ket{1}_*\ket{1}_{E_{1(2)}} \right) = \ket{\Phi^\pm}_{*E_{1(2)}}.
\end{equation*}

\noindent\textbf{3. Evaluation Across Receiver Operational Modes:}\\
\vskip-0pt
\noindent\textbf{\emph{Case A: Bob (Charlie) operates in $\mathrm{SIFT}$ Mode ($Z$-Basis Measurement)}}

\noindent In $\mathrm{SIFT}$ mode, Bob (Charlie) measures particle $B (C)$ in the $Z$-basis after the first CNOT operation:
\begin{itemize}
    \item With probability $1/2$, Bob (Charlie) measures $\ket{0}_{B(C)}$. System collapses to $\ket{0}_{B(C)} \ket{0}_{E_{1(2)}}$. Bob (Charlie) re-prepares and returns $\ket{0}_{B(C)}$.
    \item With probability $1/2$, Bob (Charlie) measures $\ket{1}_{B(C)}$. System collapses to $\ket{1}_{B(C)} \ket{1}_{E_{1(2)}}$. Bob (Charlie) re-prepares and returns $\ket{1}_{B(C)}$.
\end{itemize}

% Applying the second CNOT operation $\hat{U}_{\text{CNOT}}^{(2)}$ on the returned particle $B$ and probe $E_2$:
% \begin{equation*}
% \hat{U}_{\text{CNOT}}^{(2)} \ket{0}_B \ket{0}_{E_2} = \ket{0}_B \ket{0}_{E_2}, \quad \hat{U}_{\text{CNOT}}^{(2)} \ket{1}_B \ket{0}_{E_2} = \ket{1}_B \ket{1}_{E_2} .
% \end{equation*}
As the initial decoy states are $\ket{+}$ or $\ket{-}$, upon applying the first $\mathrm{CNOT}$ gate targeting Eve's probe state $\ket{0}_E$, the joint state evolves into the maximally entangled Bell state $\ket{\Phi^+}$ or $\ket{\Phi^-}$, respectively. After Eve applies the second $\mathrm{CNOT}$ gate, the joint system disentangles, reverting to the unentangled product states $\ket{+}_B \ket{0}_{E_1}$ and $\ket{-}_B \ket{0}_{E_1}$ (similar for Charlie also). Consequently, no residual entanglement remains between Bob's (Charlie's) decoy particle and Eve's probe state. When Bob (Charlie) performs a $Z$-basis measurement, the outcome is completely uncorrelated with Eve's intervention and perfectly mirrors the reference measurement results expected by Trent. As a result, eavesdropping cannot be detected in this case.

% and the states before applying second CNOT is $\ket{\Phi^{\pm}}$, so again applying a second CNOT it gives either $\ket{0}_{B(C)} \ket{0}_{E_{1(2)}}$ or $\ket{1}_{B(C)} \ket{1}_{E_{1(2)}}$. The states of the particle become independent of each other and measurement in $Z$ basis cannot keep any entanglement between the orginial state and probe state, also Bob (Charlie) has the same measurement result as of Trent. So in this case eavesdropping detection is not possible.

\paragraph{\textbf{Case B: Bob (Charlie) operates in $\mathrm{CTRL}$ Mode (Direct Reflection)}}
In $\mathrm{CTRL}$ mode, Bob (Charlie) reflects his particle back without measurement. However, as the reordering is unknown to Eve of the decoy particles, he cannot apply the second CNOT operation $\hat{U}_{\text{CNOT}}^{(b)}$ on the same decoy particle $*$. Rather, he ends up by applying $\hat{U}_{\text{CNOT}}^{(b)}$ on a different particle $\circ$, which is either from $\{\ket0_\circ,\ket1_\circ\}$ or one particle from the states $\ket{\Phi^\pm}$. Then the combine state becomes
% \begin{align*}
% \ket{\Upsilon_1}&:=
% \hat V^{(1)}\ket{\Phi^\pm}_{*E_{1(2)}}\ket{\alpha}_{\circ}, &\mathrm{\ if\ }\ket{\alpha}_\circ\in\{\ket{0}, \ket{1}\}\\
% \ket{\Upsilon_2}&:= \hat{V}^{(2)}\ket{\Phi^\pm}_{*E_{1(2)}}\ket{\Phi^\pm}_{\circ E'_{1(2)}}, &\mathrm{\ otherwise,}
% %=\ket{\Upsilon^\pm}_{E_{1(2)}*\circ},
% \end{align*}

\begin{align*}
\ket{\Upsilon_1}
&:= \hat{V}^{(1)}\ket{\Phi^\pm}_{*E_{1(2)}}\ket{\alpha}_{\circ},
\quad \text{when } \ket{\alpha}_{\circ}\in\{\ket{0},\ket{1}\},\\
\ket{\Upsilon_2}
&:= \hat{V}^{(2)}\ket{\Phi^\pm}_{*E_{1(2)}}
   \ket{\Phi^\pm}_{\circ E'_{1(2)}},
\quad \text{otherwise},
\end{align*}
where $\hat{V}^{(1)}:=\hat{I}_2\otimes\hat{U}_{\text{CNOT}}^{(b)}$ and $\hat{V}^{(2)}:=\hat{I}_2\otimes\hat{U}_{\text{CNOT}}^{(b)}\otimes\hat{I}_2$.
% where 
%$\ket{\Upsilon^\pm}$ is a three qubit GHZ-like state for 
% , or
% \begin{align*}
% .=\ket{\Upsilon^\pm}_{E_{1(2)}*\circ},
% \end{align*}% otherwise, for $\ket{\alpha} \in \{\ket \}$ it is a part of a larger state when $\ket{\alpha}_\circ$ represents one qubit from another $\ket{\Phi^\pm}$, generated by the forward CNOT operation applied on a different traveling qubit $\circ$ with a different probe $E'_{1(2)}$.

In this situation, $\mathrm{Trent}$ measures the decoy particle $*$ in $X$-basis getting $0$ or $1$, with probability $\frac{1}{2}$ each, as
\begin{align}
H\otimes\hat{I}\ket{\Upsilon_1}&=\hat{V}^{(1)}(H\otimes\hat I_4)\ket{\Phi^\pm}_{*E_{1(2)}}\ket{\alpha}_{\circ} \notag \\
&=\hat V^{(1)}\frac{1}{\sqrt2}(\ket0_*\ket\pm_{E_{1(2)}}+\ket1_*\ket\mp)_{E_{1(2)}} \ket{\alpha}_{\circ}  \\
  \notag  \\
H\otimes\hat{I}\ket{\Upsilon_2}&=\hat{V}^{(2)}(H\otimes\hat I_8)\ket{\Phi^\pm}_{*E_{1(2)}} \ket{\Phi^{\pm}}_{\circ E'_{1(2)}} \notag \\
&=\hat V^{(2)}\frac{1}{\sqrt2}(\ket0_*\ket\pm_{E_{1(2)}}+\ket1_*\ket\mp)_{E_{1(2)}} \ket{\Phi^{\pm}}_{\circ E'_{1(2)}} 
\end{align}

\noindent and the CNOT operation in $\hat V^{(i)}$ (for $i =1,2$) has no effect on the particle $*$ .

% Taking the partial trace over both probe systems $E_1$ and $E_2$:
% \begin{equation*}
% \rho_B^{\text{$\mathrm{CTRL}$}} = \text{Tr}_{E_1, E_2} \left( \ket{\Psi_{\text{$\mathrm{CTRL}$}}^+}\bra{\Psi_{\text{$\mathrm{CTRL}$}}^+} \right) = \frac{1}{2}\ket{0}\bra{0} + \frac{1}{2}\ket{1}\bra{1} = \frac{1}{2}\hat{I}_2 .
% \end{equation*}

% Since $\rho_B^{\text{$\mathrm{CTRL}$}} = \frac{1}{2}\hat{I}_2$ is maximally mixed, measuring particle $B$ in the original $X$-basis during Phase 5 yields:
% \begin{equation*}
% \Pr(\text{Error} \mid \text{$\mathrm{CTRL}$}, X\text{-decoy}) = \text{Tr}\left( \ket{\mp}\bra{\mp} \cdot \rho_B^{\text{$\mathrm{CTRL}$}} \right) = \frac{1}{2} .
% \end{equation*}
% In both branches, the returned state for particle $B$ remains a $Z$-basis eigenstate ($\ket{0}_B$ or $\ket{1}_B$). When $\mathrm{Trent}$ measures this returned state in the original $X$-basis during Phase 5:
% \begin{equation*}
% \Pr(\text{Error} \mid \text{$\mathrm{SIFT}$}, X\text{-decoy}) = |\braket{\mp|0}|^2 = |\braket{\mp|1}|^2 = \frac{1}{2}. 
% \end{equation*}

\subsubsection*{3. Error Quantification and Detection Probability}
% For any $X$-basis decoy particle, DCNA induces an error probability $e_{\text{DCNA}} = \frac12$ across both operational modes. 

If there are $m_X$ diagonal decoy particles out of $n$ in the transmitted particle sequence $T''_B$ (for Bob) total transmitted particles, then the probability of the DCNA attack being detected is 

\begin{align*}
  \Pr(\mathrm{Detection}) &= 1 - \Pr(\mathrm{No \; Detection}) \\
    \Pr(\mathrm{Detection}) &= 1 - \left[\frac{1}{n + m_X} \times 1 \times \frac12 + \left(1 - \frac{1}{n + m_X} \right)\times \frac12 \times \frac12 \right]^{m_X} \\ 
    &= 1 - \left[ \frac14 + \frac{1}{4(n +m_X)}\right]^{m_X}.
\end{align*}
The event ``No Detection'' happens surely when $\mathrm{SIFT}$ operation is being chosen and when $\mathrm{CTRL}$ is chosen this event happens with probability $\tfrac12$.
As Bob (or Charlie) chooses between the $\mathrm{CTRL}$ and $\mathrm{SIFT}$ operations uniformly at random with equal probability $\frac{1}{2}$, each corresponding term in the ``No Detection'' expression incorporates a scaling factor of $\frac{1}{2}$.

 As the number of diagonal decoy particles increases, i.e., ($m_X \rightarrow \infty$), the probability of detecting the eavesdropping approaches $1$. Therefore, by choosing a sufficiently large, precomputed value of ($m_X$), the DCNA attack can be detected with an arbitrarily high probability.
% total detected errors $X_{\text{err}}$ follow:
% \begin{equation*}
% X_{\text{err}} \sim \text{Binomial}\left(m_X, \frac{1}{2}\right), \quad \mathbb{E}[X_{\text{err}}] = \frac{m_X}{2} .
% \end{equation*}
% The overall sequence error rate is $\text{QBER}_{\text{total}} = \frac{m_X}{2n}$. The probability of evading detection ($P_{\text{evade}}$) and the corresponding detection probability ($P_{\text{detect}}$) are:
% \begin{equation*}
% P_{\text{evade}} = 2^{-m_{X}}, \quad P_{\text{detect}} = 1 - 2^{-m_{X}} .
% \end{equation*}
% As $m_X$ tends to large values, $P_{\text{evade}} \to 0$ exponentially, making DCNA detectable with probability arbitrarily close to 1.
% \qed
% \end{proof}

\subsection{Entangle-Measure Attack}

Detection of an Entangle-Measure (EM) attack strictly requires the inclusion of $X$-basis decoy states ($\{\ket{+}, \ket{-}\}$), as probe-induced phase decoherence remains completely invisible to computational $Z$-basis measurements. When an adversary entangles an auxiliary probe to gain key information, the phase coherence of traveling particles is unavoidably destroyed, forcing a $50\%$ error rate on reflected $X$-basis decoys during $\text{$\mathrm{CTRL}$}$ mode. Consequently, evaluating our protocol under the $X$-basis scenario is essential to prove that any eavesdropping attempt drives the evasion probability exponentially to zero.
\medskip 

    Let an eavesdropper or dishonest quantum party Eve initialize a probe state $\ket{E}_E = \ket{0}_E$ and execute a general entangling operation $\hat{U}_E$ on passing particle $B$, as follows
\begin{align*}
\hat{U}_E \ket{0}_B \ket{0}_E &= \alpha \ket{0}_B \ket{e_{00}}_E + \beta \ket{1}_B \ket{e_{01}}_E \; ,\\
\hat{U}_E \ket{1}_B \ket{0}_E &= \gamma \ket{0}_B \ket{e_{10}}_E + \delta \ket{1}_B \ket{e_{11}}_E \; ,
\end{align*}
    where $|\alpha|^2 + |\beta|^2 = 1$, $|\gamma|^2 + |\delta|^2 = 1$, and $\ket{e_{ij}}_E$ represent normalized probe states.

    To prevent immediate detection during $Z$-basis decoy checks in $\mathrm{SIFT}$ mode, Eve must avoid bit-flip errors by setting $\beta = \gamma = 0$, yielding $\hat{U}_E \ket{0}\ket{0} = \alpha \ket{0}\ket{e_{00}}$ and $\hat{U}_E \ket{1}\ket{0} = \delta \ket{1}\ket{e_{11}}$ with $|\alpha| = |\delta| = 1$.

Now consider a diagonal decoy particle prepared in state $\ket{+}_B = \frac{1}{\sqrt{2}}(\ket{0}_B + \ket{1}_B)$. The action of $\hat{U}_E$ creates the composite state:
\begin{equation*}
\ket{\Psi_{BE}} = \hat{U}_E \ket{+}_B \ket{0}_E = \frac{1}{\sqrt{2}} \left( \alpha \ket{0}_B \ket{e_{00}}_E + \delta \ket{1}_B \ket{e_{11}}_E \right).
\end{equation*}

    In $\mathrm{CTRL}$ mode, Bob reflects particle $B$ without measurement or disturbance. Upon return, the state is evaluated in the diagonal $X$-basis ($\{\ket{+}, \ket{-}\}$). The probability of measuring the correct state $\ket{+}_B$ is:
\begin{align*}
\Pr(\text{Correct} \mid \text{$\mathrm{CTRL}$}, \ket{+}) &= \tr\left( \ket{+}\bra{+}_B \cdot \tr_E(\ket{\Psi_{BE}}\bra{\Psi_{BE}}) \right) \\
    &= \frac{1}{4} \big( 2 + 2 \text{Re}(\alpha^* \delta \bra{e_{00}}\ket{e_{11}}) \big).
\end{align*}

    Defining $F = \text{Re}(\alpha^* \delta \bra{e_{00}}\ket{e_{11}})$ as the probe fidelity overlap parameter ($0 \le F \le 1$), the corresponding error probability $e_{\text{EM}}$ on $X$-basis decoys is
\begin{equation*}
e_{\text{EM}} = 1 - \Pr(\text{Correct}) = \frac{1 - F}{2}.
\end{equation*}

    If Eve attempts to gain information regarding the key sequence, the probe states must be orthogonal ($\bra{e_{00}}\ket{e_{11}} = 0 \implies F = 0$), forcing $e_{\text{EM}} = \tfrac12$. Because participants withhold operation mode announcements ($\text{$\mathrm{SIFT}$}/\text{$\mathrm{CTRL}$}$) until the completion of particle transmission, Eve cannot selectively avoid $\mathrm{CTRL}$ mode decoys, guaranteeing detection via $\text{QBER}$ threshold evaluation.

\subsection{Decoy State Substitution Attack}

All of the attacks we discussed above, has been detected due to the presence of the diagonal decoy states $(\ket\pm)$. Suppose a dishonest Quantum Third Party ($\text{QTP}$) Trent executes a Decoy State Substitution Attack (DSSA) by maliciously replacing the sequences $X^+$ and $X^-$ of $X$-basis decoy states ($\ket{\pm}$) with $X^+_s$ and $X^-_s$, respectively, containing computational $Z$-basis states ($\ket{0}$ and $\ket{1}$) during the decoy preparation in Step 2.4. %and present them to Alice as sequences of $\ket+$ and $\ket-$, respectively. 
We assume that $\mathrm{Trent}$ sends the sequences of $Z$-basis decoy states correctly as $Z^0$ and $Z^1$. The above substitution can be performed in two ways, as follows.\\
\textbf{Case A [Each sequence contains only one state]: }Without loss of generality, assume that $X^+_s$ ($X^-_s$) consists entirely of the state $\ket{0}$ ($\ket{1}$). During the $\mathrm{SIFT}$ mode, Bob and Charlie would therefore obtain the same outcome with probability $1$ for all states in each sequence. In contrast, for the actual sequences consisting of $\ket{+}$ and $\ket{-}$ states, Bob and Charlie should obtain $\ket{0}$ and $\ket{1}$ with equal probability $\frac{1}{2}$. Thus, such a substitution can be detected.

In the $\mathrm{CTRL}$ mode, however, measuring the state in the $Z$-basis allows Trent to determine which sequence the state belongs to. By announcing the corresponding result accordingly, Trent may avoid detection.\\
\textbf{Case B [Both sequences contain both $\ket0$ and $\ket1$]: }Because a classical participant measuring strictly in the $Z$-basis% observes a uniform probability distribution $|\braket{0|\pm}|^2 = |\braket{1|\pm}|^2 = 1/2$
, these substituted states cannot be detected during $\mathrm{SIFT}$ mode. 

However, the attack fails during $\mathrm{CTRL}$ mode. For a genuine diagonal decoy state $\ket{\psi} \in \{\ket{+}, \ket{-}\}$, an undisturbed reflection in $\mathrm{CTRL}$ mode yields a deterministic measurement outcome matching the prepared state when measured in the $X$-basis
\begin{equation*}
\Pr(\text{Correct Outcome} \mid \text{Genuine } X\text{-decoy}) = |\braket{\psi|\psi}|^2 = 1.
\end{equation*}

Conversely, if Trent substituted a computational state $\ket{\phi} \in \{\ket{0}, \ket{1}\}$, measuring the reflected state in the $X$-basis yields an incorrect decoy outcome with a probability of
\begin{equation*}
\Pr(\text{Error} \mid \text{Substituted } Z\text{-state}) = |\braket{\psi|\phi}|^2 = \frac{1}{2}.% = 50\%
\end{equation*}
Even if Trent does not perform an $X$-basis measurement, since he knows that no decoy state is prepared in the $X$-basis, he still cannot determine whether the substituted state is from $X^+_s$ or $X^-_s$. Consequently, he must guess the sequence, which is correct with probability $\frac{1}{2}$ only.

Crucially, participants defer the announcement of their operational modes ($\text{$\mathrm{SIFT}$}$ or $\text{$\mathrm{CTRL}$}$) until Trent has completed all measurements on the returned sequence and declare the outcomes. Because Trent cannot predict which particle undergoes which mode, the error induced in each cases generates a strong statistical anomaly during public correlation checks, guaranteeing immediate detection of dishonest $\text{QTP}$ and secure termination of the protocol.

\section{Qubit Efficiency Analysis}
The qubit efficiency of a quantum communication protocol is defined according to Cabello's model~\cite{PhysRevLett.85.5635} as
$\eta = \frac{\delta_1}{\delta_2}$,
where $\delta_1$ represents the length of the transmitted classical secret sequence, and $\delta_2$ denotes the total number of qubits prepared across all stages of the protocol.

In our updated semi-quantum secret sharing (SQSS) protocol:
\begin{enumerate}
    \item \textbf{Secret Payload ($\delta_1$):} Alice transmits an $n$-bit secret $S$, yielding $\delta_1 = n$.
    \item \textbf{Total Prepared Qubits ($\delta_2$):} 
    \begin{itemize}
        \item Trent prepares $n$ Bell states, comprising $2n$ single qubits ($T_0$ and $T_1$).
        \item Alice prepares $2n$ single-qubit key states ($|K_B\rangle$ and $|K_C\rangle$) for quantum key injection.
        \item Alice inserts $2m$ decoy state particles ($m$ each into $T'_B$ and $T'_C$).
    \end{itemize}
\end{enumerate}

% Neglecting re-prepared particles during $\mathrm{SIFT}$ mode
Therefore, the base prepared count is $\delta_2 = 4n + 2m$. Consequently, the qubit efficiency $\eta$ of the updated scheme is:
\begin{equation*}
\eta = \frac{n}{4n + 2m} = \frac{1}{4 + 2\left(\frac{m}{n}\right)} = \frac{1}{4+ 2\gamma} \; ,
\end{equation*}

where $\gamma = \frac mn$ the ratio of decoy particles relative to payload particles.

\subsection{Comparison }
In this section, we compare the scheme that we have proposed with a number of well-known quantum secret sharing protocols by examining various performance indicators and security measures. The evaluation centres on the operational complexity of the quantum resources needed (for example, multi-qubit product states, GHZ states, or standard Bell states) and on the hardware capabilities of the secret dealer, making a clear distinction between schemes which require a fully quantum dealer and those which allow for a classical or semi-quantum party. We also look at the protocol's security resilience, especially in relation to eavesdropping attacks such as the Double CNOT Attack (DCNA), the type of secrets that are shared (whether specific target bits or arbitrary bit strings are set up), and the qubit efficiency, which is the ratio of secret bits obtained per quantum state transmitted.

% ========================================
% \renewcommand{\thefootnote}{\fnsymbol{footnote}}
\renewcommand{\thefootnote}{\dag}
% =========================================

\begin{table}[htbp]
\centering
\caption{\label{Tab:Comparisons}Comparison of the Proposed and Existing SQSS Protocols}
\begin{tabular}{||lllclc||} 
\hline
\multirow{3}{*}{\textbf{Protocols \quad}} & 
\multirow{2}{*}{\textbf{Quantum \quad}} & \textbf{Dealer's } & \multicolumn{1}{l}{\textbf{Security}} & \multicolumn{1}{c}{\multirow{2}{*}{\textbf{Shared}}} & \multicolumn{1}{l||}{\multirow{2}{*}{\textbf{Qubit}}}\\
 & \multirow{2}{*}{\textbf{resources }} & \textbf{secret} & \multicolumn{1}{l}{\textbf{against}} & \multicolumn{1}{c}{\multirow{2}{*}{\textbf{secrets}}} & \multicolumn{1}{l||}{\multirow{2}{*}{\textbf{efficiency}}}\\
 & & \textbf{state } & \multicolumn{1}{l}{\textbf{DCNA}} & & \\
\hline \hline
& $n$-qubit&&&&\\
Li \emph{et al.}~\cite{Li2018} & product & Quantum & No & {Unspecific bits} & $\frac14$ \\
&states&&&&\\
\hline
Hou \emph{et al.}~\cite{Hou2024Circular} & GHZ state & Quantum & Yes & Unspecific bits & $\frac13$ \\
\hline
Tian \emph{et al.}~\cite{Tian2021} & Bell state & Quantum & No & Specific bits & $\frac14$ \\
\hline
Ma \emph{et al.}~\cite{Ma_2024} & Bell state & Classical & No & Specific bits & $\frac{1}{[2(n+c)]}$ \\[2pt]
\hline
He \emph{et al.}~\cite{He2026} & Bell state & Classical & Yes & Specific bits & $\frac{1}{4+2\gamma}{\footnotemark}$\\
\hline
\textbf{Our}& \multirow{2}{*}{\text{Bell state}} & \multirow{2}{*}{\text{Classical}} & \multirow{2}{*}{\text{Yes}} & \multirow{2}{*}{\text{Specific bits}} & \multirow{2}{*}{${\frac{1}{4+ 2\boldsymbol\gamma}}$} \\
\textbf{Protocol}&&&&&\\
\hline
\end{tabular}
\end{table}
\vskip-2pt

The comparative analysis shown in Table~\ref{Tab:Comparisons} demonstrates that, with the equivalent parameter values, our proposed protocol successfully eliminates collusion vulnerabilities between classical participants and untrusted quantum third parties, thereby establishing a robust defense against the collusion between untrusted classical users and quantum third parties, while ensuring the quantum particle consumption remains strictly optimal.

\footnotetext{While the authors in~\cite{He2026} calculate a qubit efficiency of $\frac{n}{4n} = \frac{1}{4}$ by omitting the decoy states, but if  $m$ decoy particles are inserted by each party then it expands the total quantum resource to $4n + 2m$, yielding a revised efficiency of $\eta = \frac{1}{4 + 2 {\frac{m}{n}}}$, which is same as our protocol .}

%=================================
\section{\label{sec:cd}Conclusion and Future Direction}
In this paper, we identified and analyzed a critical security vulnerability in the SQSS protocol proposed by He \emph{et al.}~\cite{He2026}, showing its susceptibility to collusion attacks involving both internal and external adversaries. To resolve these flaws, we presented an enhanced SQSS scheme that strictly preserves the operational definition of semi-quantumness, also establishing a robust defense against the collusion of internal and external adversaries. By incorporating diagonal $X$-basis decoy states, secret permutation functions, and enforcing a delayed announcement of the operational mode ($\text{$\mathrm{SIFT}$}/\text{$\mathrm{CTRL}$}$), our protocol resolves the blind spots of classical $Z$-basis measurements against phase-probe attacks.

Looking ahead, several promising avenues remain to extend this research. A key extension involves generalizing the scheme from a two-receiver setting to an $N$-party semi quantum secret sharing network using multi-qubit GHZ or $W$ states while maintaining minimal semi-quantum hardware capabilities, and accounting for collusion attacks by internal and external adversaries.  Finally, exploring asymmetric semi-quantum models, in which receivers possess heterogeneous quantum capabilities, may offer a promising direction for designing secret sharing protocols that optimize resource allocation and performance in practical quantum communication networks. 

% ===============================
\appendix

\section{Correctness of the Collusion Attack in Sect.~\ref{sec:Attack_scenario}}

\subsubsection{Initial Parameters and Execution Setup: \label{appendix:A}}

To formally prove the correctness of the universal $Z$-sweep collusion attack against the protocol, we analyze the execution trace using the following explicit classical sequences provided by Alice
$$S=\{1,0,1,0,1\},\;
K_B=\{0,1,1,0,0\},\;
K_C=\{1,0,0,0,1\}.$$

\paragraph*{\textbf{Step I-1 \emph{(Classical Encryption):}}}
Alice establishes the master encryption key $K$ by computing the bitwise XOR of her classical key sequences
\[
K = K_B \oplus K_C = \{0\oplus1, 1\oplus0, 1\oplus0, 0\oplus0, 0\oplus1\} = \{1, 1, 1, 0, 1\}
\]
Using $K$, Alice encrypts the secret sequence $S$ into the ciphertext sequence $S_T$
\[
S_T = S \oplus K = \{1\oplus1, 0\oplus1, 1\oplus1, 0\oplus0, 1\oplus1\} = \{0, 1, 0, 0, 0\}
\]
The classical sequence $S_T$ is transmitted to $\mathrm{Trent}$.

\paragraph*{\textbf{Steps I-2 to I-4 \emph{(Stream Preparation and Injection):}}}
Based on $S_T = \{0, 1, 0, 0, 0\}$, $\mathrm{Trent}$ produces five entangled EPR pairs assigned to states $\ket{\Psi^+}_{BC}$ or $\ket{\Phi^+}_{BC}$ according to the protocol rules. The Trents insert their decoy states, and Alice subsequently inserts her key states $\ket{K_B}$ and $\ket{K_C}$ randomly in the $Z$-basis.

Based on the explicit structural mapping of the quantum channels, Alice distributes her secret using 5 informational EPR pairs, while interleaving key fragments directly within the streams. The exact 16-element quantum state sequences transmitted to the channels are defined as:
\begin{multline*}
    T'''_B = \{ \ket{+}_{B1}, \mathbf{\ket{0}}, \mathbf{\ket{1}}, \ket{\psi}_B, \ket{+}_{B2}, \ket{\phi}_B, \ket{-}_{B1}, \mathbf{\ket{1}}, \mathbf{\ket{0}}, \mathbf{\ket{0}}, \\
    \ket{\psi}_B, \ket{0}_{B1},  \ket{\psi}_B, \ket{1}_{B2}, \ket{\psi}_B, \ket{1}_{B2} \}
\end{multline*}
\begin{multline*}
    T'''_C = \{ \ket{\psi}_C, \ket{0}_{C2}, \ket{0}_{C1}, \ket{\phi}_C, \ket{+}_{C1}, \mathbf{\ket{1}}, \mathbf{\ket{0}}, \ket{\psi}_C, \ket{1}_{C2}, \\
    \ket{\psi}_C, \mathbf{\ket{0}}, \mathbf{\ket{0}}, \mathbf{\ket{0}}, \ket{1}_{C2}, \ket{\psi}_C, \mathbf{\ket{1}} \}
\end{multline*}
where bold entries highlight the classical key bits injected by Alice ($K_B$ and $K_C$), subscripts $B_i ,C_i$ denote Trent-inserted server decoys, and $\ket{\psi},\ket{\phi}$ represent the single particle shares from the informational Bell states $\ket{\Psi^+}_{BC},\ket{\Phi^+}_{BC}$.

\subsubsection{The Intercept-Resend Attack Simulation:} The malicious Bob (secretly colludes with $\mathrm{Trent}_1$ and $\mathrm{Trent}_2$) intercepts the transmission of $T'''_C$ to Charlie. Bob subjects all 16 slots to a universal $Z$-basis measurement operator $\mathcal{M}_z$.

\paragraph*{Complete Trace of Bob's Measurement Log: }
Table~\ref{tab:16_element_trace} demonstrates exactly how the universal $Z$-sweep captures the key data non-destructively while masking the modification of diagonal decoys through server collusion.

\begin{table}[htbp]
\centering
\caption{Universal $Z$-Sweep Trace on the 16-Element Sequence $T'''_C$}
\label{tab:16_element_trace}
\resizebox{\textwidth}{!}{
\begin{tabular}{ ||c ccccc|| }
\hline \rowcolor{headerblue}
\textbf{Slot ($j$)} & \textbf{ State} & \textbf{Classification} & \textbf{Measurement Mode} & \textbf{Logged Bit} & \textbf{State to Charlie} \\ \hline \hline
1  & $\ket{\psi}_C$    & Bell EPR Pair   & Deterministic ($100\%$)      & 0 or 1     & $\ket{0}$ or $\ket{1}$ \\
2  & $\ket{0}_{C2}$    & Rectilinear Decoy & Deterministic ($100\%$)    & 0          & $\ket{0}$ \\
3  & $\ket{0}_{C1}$    & Rectilinear Decoy & Deterministic ($100\%$)    & 0          & $\ket{0}$ \\
4  & $\ket{\phi}_C$    & Bell EPR Pair   & Deterministic ($100\%$)      & 0 or 1     & $\ket{0}$ or $\ket{1}$ \\
5  & $\ket{+}_{C1}$    & \textbf{Diagonal Decoy}  & \textbf{Basis mismatch} & \textbf{0 or 1} & $\mathbf{\ket{0}}$ \textbf{or} $\mathbf{\ket{1}}$ \\
6  & $\ket{1}$         & Alice Key ($K_C$) & Deterministic ($100\%$)     & 1          & $\ket{1}$ \\
7  & $\ket{0}$         & Alice Key ($K_C$) & Deterministic ($100\%$)     & 0          & $\ket{0}$ \\
8  & $\ket{\psi}_C$    & Bell EPR Pair   & Deterministic ($100\%$)      & 0 or 1     & $\ket{0}$ or $\ket{1}$ \\
9  & $\ket{1}_{C2}$    & Rectilinear Decoy & Deterministic ($100\%$)    & 1          & $\ket{1}$ \\
10 & $\ket{\psi}_C$    & Bell EPR Pair   & Deterministic ($100\%$)      & 0 or 1     & $\ket{0}$ or $\ket{1}$ \\
11 & $\ket{0}$         & Alice Key ($K_C$) & Deterministic ($100\%$)     & 0          & $\ket{0}$ \\
12 & $\ket{0}$         & Alice Key ($K_C$) & Deterministic ($100\%$)     & 0          & $\ket{0}$ \\
13 & $\ket{0}$         & Alice Key ($K_C$) & Deterministic ($100\%$)     & 0          & $\ket{0}$ \\
14 & $\ket{1}_{C2}$    & Rectilinear Decoy & Deterministic ($100\%$)    & 1          & $\ket{1}$ \\
15 & $\ket{\psi}_C$    & Bell EPR Pair   & Deterministic ($100\%$)      & 0 or 1     & $\ket{0}$ or $\ket{1}$ \\
16 & $\ket{1}$         & Alice Key ($K_C$) & Deterministic ($100\%$)     & 1          & $\ket{1}$ \\ \hline
\end{tabular}
}
\end{table}

\subsubsection{Proof of Security Check Neutralization and Post Run Extraction: }

\paragraph*{Security Check Evasion (Step I-5): }
When Alice initiates the verification round, Charlie tests the decoys. The structural collapse of Slot 5 ($\ket{+}_{C1}$) is completely hidden:
\begin{itemize}
    \item If Charlie measures Slot 5 in the $Z$-basis, he obtains the pre-collapsed classical result ($0$ or $1$) and reports it to Trent. Trent (Bob) auto-validates it against his corrupted answer sheet. 
    \item If Charlie reflects Slot 5 back, it arrives at Trent as a collapsed $Z$-basis eigenstate. Bob, that is Trent, suppresses the required physical basis validation measurement and falsely reports a clean state match back to the network. So all over, the protocol records an error rate of $\mathcal{E} = 0\%$, avoiding abort routines.
\end{itemize}
% The protocol records an error rate of $\mathcal{E} = 0\%$, avoiding abort routines.

\paragraph*{Post Execution Key Extraction (Step I-6): }
During basis reconciliation, Alice publicly announces the position indices corresponding to $K_C$, where
$\text{Indices}(K_C) = [6,\, 7,\, 11,\, 12,\, 13,\, 16].$
Bob ignores Charlie's data stream and reads his internal interception log at those coordinates as follows
\begin{align*}
    \text{Slot } 6 \rightarrow \mathbf{1}, \;
    \text{Slot } 7 \rightarrow \mathbf{0}, \;
    \text{Slot } 11 \rightarrow \mathbf{0}, \;
    \text{Slot } 12 \rightarrow \mathbf{0}, \;
    \text{Slot } 13 \rightarrow \mathbf{0}, \;
    \text{Slot } 16 \rightarrow \mathbf{1}.
\end{align*}

\paragraph*{Secret Recovery: }
Bob combines the stolen $K_C$ sequence with the $K_B$ sequence obtained honestly via channel $T'''_B$
\[
K = K_B \oplus K_C = \{0,1,1,0,0\} \oplus \{1,0,0,0,1\} = \{1, 1, 1, 0, 1\}
\]
Since Bob holds the initial ciphertext $S_T = \{0, 1, 0, 0, 0\}$ as $\mathrm{Trent}_1$, he completely bypasses Charlie to compute
\[
S = S_T \oplus K = \{0,1,0,0,0\} \oplus \{1,1,1,0,1\} = \{1, 0, 1, 0, 1\}
\]
The derived sequence matches Alice's initial secret $S$ identically. This completes the proof of correctness for the inside-collusion attack.

% ================NEW SECTION===========

% \subsection{DCNA Attack}

% ======================================

% This yields $K_C = \{1, 0, 0, 0, 0, 1\}$. Bob takes his legitimate sequence $K_B = \{0, 1, 1, 0, 0\}$ from channel $T'''_B$, combines them to build the master decoding key $K$, and successfully decrypts the ciphertext $S_T$ to grab Alice's secret $S$ with zero collaboration from Charlie.

% ===============================

\bibliographystyle{splncs04}
 \bibliography{ref}

\end{document}